\documentclass[conference,letterpaper]{IEEEtran}
\usepackage{booktabs}
\usepackage{amsmath}
\usepackage{amssymb}
\usepackage{caption}
\usepackage{algorithmic}
\usepackage{todonotes}
\usepackage{amsfonts}
\usepackage{cite}
\usepackage{booktabs}
\usepackage{graphicx}
\usepackage{array}
\usepackage[shortlabels]{enumitem}

\newtheorem{definition}{Definition}[section]
\newtheorem{theorem}[definition]{Theorem}

\newtheorem{lemma}[definition]{Lemma}

\ifCLASSINFOpdf
\else
\fi
\IEEEoverridecommandlockouts

\begin{document}
%
% paper title
% Titles are generally capitalized except for words such as a, an, and, as,
% at, but, by, for, in, nor, of, on, or, the, to and up, which are usually
% not capitalized unless they are the first or last word of the title.
% Linebreaks \\ can be used within to get better formatting as desired.
% Do not put math or special symbols in the title.

\title{Chordal Codes: Code Construction and Efficient Gaussian-Elimination Decoding}
\author{ 
   % \IEEEauthorblockN{Yulin Chen}
   % \IEEEauthorblockA{Department of Information Engineering \\
   %                   The Chinese University of Hong Kong\\
   %                   % Hong Kong\\
   %                   Email: cy022@ie.cuhk.edu.hk}
   % \and
   % \IEEEauthorblockN{Laigang Guo }
   % \IEEEauthorblockA{University}

   % \and
   % \IEEEauthorblockN{ Tong Li}
   % \IEEEauthorblockA{University}

   % \and
   % \IEEEauthorblockN{Mingyang Zhu }
   % \IEEEauthorblockA{University}
Yulin CHEN,\quad

Laigang GUO,\quad

Tong LI,\quad

Mingyang ZHU
\thanks{Yulin Chen is with the Department
of Information Engineering, The Chinese University of Hong Kong, Hong Kong, SAR, China e-mail: (1155189560@link.cuhk.edu.hk)(\emph{Corresponding authors: Mingyang Zhu; Laigang Guo})}
\thanks{Laigang Guo is with the Laboratory of Mathematics and Complex Systems (Ministry of Education), School of Mathematical Sciences, Beijing Normal University, Beijing 100875, China (e-mail: lgguo@bnu.edu.cn)}
\thanks{Tong LI is with the Academy of Mathematics and Systems Science, Chinese Academy of Sciences, Beijing, 100190, P. R. China e-mail:litong@amss.ac.cn}
\thanks{M. Zhu is with the School of Information and Electronic Engineering, East China Normal University, Shanghai 200241, China (e-mail: myzhu@ee.ecnu.edu.cn)}}

\maketitle

\begin{abstract}
    Maximum-likelihood decoding of linear codes over erasure channels can be considered as solving a linear system of a column submatrix of the parity-check matrix. In practice, Gaussian elimination is usually applied, and its computational cost depends on the structure of the matrix, especially the sparsity and the fill-in generated during the elimination. The \emph{chordal graph}, or \emph{triangulated graph}, first introduced by C. Berge, has a so-called \emph{perfect elimination ordering}, indicating that if the primal graph of a matrix is a chordal graph, then there exists a perfect elimination ordering such that a special class of fill-in under Gaussian elimination following this ordering will be avoided. In this paper, we introduce \emph{chordal codes}, i.e., codes equipped with parity-check matrices such that the primal graphs of the parity-check matrices are chordal. We prove that, for a chordal code, the decoding submatrix associated with any erasure pattern retains a chordal primal graph. So, the complexity of chordal elimination (Gaussian elimination following the perfect elimination ordering applied to the chordal code decoding) will be smaller than decoding a random code by Gaussian elimination. Also, we provide an upper bound on the complexity of chordal elimination.
\end{abstract}
% Note that the IEEE does not put floats in the very first column
% - or typically anywhere on the first page for that matter. Also,
% in-text middle ("here") positioning is typically not used, but it
% is allowed and encouraged for Computer Society conferences (but
% not Computer Society journals). Most IEEE journals/conferences use
% top floats exclusively. 
% Note that, LaTeX2e, unlike IEEE journals/conferences, places
% footnotes above bottom floats. This can be corrected via the
% \fnbelowfloat command of the stfloats package.
\section{Introduction}
Linear codes are widely used to protect information against losses in communication and storage systems. For a linear code with parity-check matrix $H\in\mathbb{F}_q^{m\times n}$, maximum-likelihood decoding over a $q$-ary erasure channel reduces to solving \begin{equation*} H_E x_E = -H_{E^c}x_{E^c}, \end{equation*} where $E$ is the set of erased coordinates and $H_E$ is the corresponding column submatrix~\cite{RichardsonUrbanke2008}. Therefore, decoding complexity is closely related to the cost of Gaussian elimination on submatrices of $H$. Gaussian elimination may introduce additional non-zero entries, known as fill-in (a zero entry in the matrix turning non-zero during the Gaussian elimination\cite{harb2025refill}), which increases computational and memory costs. Therefore, it is natural to design codes with specially structured parity-check matrices to reduce the computational complexity of Gaussian elimination. In the last decade, several codes have been proposed, such as LDPC-Band codes \cite{soro2009erasure}, pseudo-band form of the QC-LDPC codes \cite{cunche2010analysis} and polar codes with sparse parity-check representation \cite{urman2021efficient}.  In this paper, we propose a novel code construction with lower decoding complexity than conventional Gaussian-elimination decoding. Inspired by the perfect elimination property of chordal graphs, we design the parity-check matrix such that its primal graph is chordal. This structure allows Gaussian elimination to follow a perfect elimination ordering, thereby controlling fill-in and reducing the overall decoding complexity. The remainder of this paper is organized as follows. Section~II introduces the necessary theoretical concepts in graph theory and coding theory. Section~III presents the chordal code construction. In Section~IV, we compare the complexities of chordal elimination (Gaussian elimination following a perfect elimination ordering in a chordal graph) with Gaussian elimination through theoretical analysis as well as simulations. Section~V is the conclusion.
\addtolength{\textheight}{-0.05in}
\section{Preliminaries}\label{II}
In this section, we introduce some concepts in graph theory which will be used later in this paper. Also, our channel model is introduced in this section.
\subsection{Notation in Graph Theory}
\begin{definition}[Undirected graph]
    An undirected graph $G$ consists of a vertex set $V:=\{x_0,x_1\cdots,x_{n-1}\}$ and an edge set $E:=\{e_0,e_1,\cdots,e_{m-1}\}$. An edge can be represented by the two endpoints. For example, if there is an edge $e_i$ between two vertices $x_j$ and $x_k$, then edge $e_i$ can be represented as $(x_j,x_k)$. We say that $x_j$ and $x_k$ are \emph{adjacent}, or \emph{connected}, if there exists an edge whose two component vertices are $x_j$ and $x_k$. A sequence of edges $(x_1,x_2),(x_2,x_3),\cdots,(x_{i-1},x_i),(x_i,x_1)$ forms a \emph{cycle} of length $i$ for $i\geq 3$, represented by $x_1-x_2-\cdots-x_i-x_1$.
\end{definition}

\vspace{2mm}

\begin{definition}[Clique]
    Let $G=(V,E)$ be an undirected graph with vertex set $V=\{x_0,x_1,\cdots, x_{n-1}\}$ and edge set $E=\{e_0,e_1,\cdots,e_{m-1}\}$. A \emph{clique} $C$ is a subgraph of $G$ such that any two vertices in $C$ are adjacent, i.e.,  there is an edge between any two vertices in $C$.
\end{definition}

\vspace{2mm}

\begin{definition}[Induced Subgraph\cite{diestel2017extremal}]
    Let $G$ be an undirected graph with vertex set $V=\{x_0,x_1,\cdots, x_{n-1}\}$ and edge set $E=\{e_0,e_1,\cdots,e_{m-1}\}$. An \emph{induced subgraph} of $G$, denoted by $G[V']=(V',E')$, is a graph determined by a subset of vertices
$V'\subseteq V$, where \[
E'=\{\{u,v\}\in E : u,v\in V'\}.
\]
That is, $G'$ contains all vertices in $V'$ and all edges of
$G$ whose two endpoints both belong to $V'$. 
\end{definition}
\addtolength{\topmargin}{0.03in}
\vspace{2mm}
The chordal graph was first introduced in \cite{berge1967some}. It has several equivalent definitions. The following two definitions of the chordal graph will be used in the subsequent sections.
\begin{definition}[Chordal Graph\cite{cifuentes2016exploiting}]\label{chor}
    Let $G$ be an undirected graph with vertex set $V=\{x_0,x_1,\cdots, x_{n-1}\}$. An ordering of its vertices $x_0>x_1>\cdots >x_{n-1}$ is called a \emph{perfect elimination ordering} if for all $x_l\in V$, the induced subgraph $G[X_l]$ is a clique, where the set $X_l$ is defined as 
    \begin{equation}
        X_l:=\{x_l\}\cup\{x_m: x_m \mbox{ is adjacent to }x_l\mbox{ and } x_m<x_l\}.
    \end{equation}
    $G$ is called \emph{chordal} if it has a perfect elimination ordering.
\end{definition}

\vspace{2mm}
\addtolength{\textheight}{-0.05in}
Chordal graph can also be defined equivalently as follows:
\begin{definition}[Chordal Graph\cite{bondy2008graph}]
Let $G=(V,E)$ be an undirected graph with vertex set $V=\{x_0,x_1,\cdots, x_{n-1}\}$ and edge set $E=\{e_0,e_1,\cdots,e_{m-1}\}$. The graph $G$ is chordal if every cycle of length greater than three has a chord. 
    
\end{definition}

\vspace{2mm}
 Note that a chord is an edge that connects two vertices which are not adjacent in the cycle. For example, let $x_1-x_2-x_3-x_4-x_1$ be a cycle of length $4$. Since $x_1$ is only adjacent to $x_2$ and $x_4$, $x_1$ is not adjacent to $x_3$. So if edge $e=(x_1,x_3)$ exists, it is a chord which connects vertices $x_1$ and $x_3$.

%\subsection{Primal Graph}
\begin{definition}[Primal Graph]\label{primaldef}
    Let $H\in\mathbb{F}_q^{m\times n}$ be a matrix. A primal graph of $H$, denoted by $P(H)$, is an undirected simple graph defined as follows:
    \begin{itemize}
        \item The vertex set $V$ of $P(H)$ is \[V=\{1,2,\cdots, n\},\]
        where vertex $j$ corresponds to column $j$ in $H$.
        \item For each row $i\in\{1,2,\cdots, m\}$, define the vertex set \[V_i=\{j\in V| H_{i,j}\neq 0\}.\]
        The set $V_i$ induces a clique in $P(H)$, i.e., there is an edge between any two vertices in $V_i$.
    \end{itemize}
\end{definition}
\vspace{2mm}

Since $P(H)$ is an undirected simple graph, the cliques induced by different rows in $H$ can intersect or coincide.

Note that vertices $i,j$ in $P(H)$ are adjacent if and only if there exists at least one row $k$, such that $H_{k,i}\neq 0$ and $H_{k,j}\neq 0$.

\vspace{2mm}

As proved in Lemma 1 and Theorem 1 in \cite{rose1970triangulated}, there always exists a so-called \emph{monotone transitive ordering} in a chordal graph which ensures that the vertex elimination adds no edges. In Gaussian elimination,  this reduces the fill-in generated in the Gaussian elimination in the corresponding matrix. This monotone transitive ordering is in fact equivalent to the perfect elimination ordering (PEO)\cite{cifuentes2016exploiting}. Therefore, let $H$ be a matrix with $P(H)$ being chordal. Then we can reduce the complexity of the Gaussian elimination of $H$ by choosing the pivots under the perfect elimination ordering of $P(H)$. 

For a matrix under Gaussian elimination, a row is called \emph{active} if it has not been selected as a pivot row in any previous elimination step. We say that a column $x_l$ is \emph{processed} when it is examined for a pivot: if it contains a non-zero entry in an active row, a pivot row is selected and the column is eliminated from all other active rows. Otherwise, the column is skipped. 

\subsection{Linear Coding and Erasure Decoding}\label{dc}
Let $m<n$, and let $H\in\mathbb{F}_q^{m\times n}$ be a full-row-rank parity-check matrix. The associated linear code is
\begin{equation}
    \mathcal{C}
    :=
    \left\{
        \boldsymbol{x}\in\mathbb{F}_q^n :
        H\boldsymbol{x}=\boldsymbol{0}
    \right\},
\end{equation}
which is an $[n,n-m]_q$ linear code. Suppose that a codeword $\boldsymbol{x}\in\mathcal{C}$ is transmitted over a $q$-ary erasure channel.  Note that in the rest of the paper, we will use symbol $x$ to denote a vertex in the graph as well as a column vector of length $n$ with each symbol defined in base field $\mathbb{F}_q.$ Let $\mathcal{E}\subseteq[n]$ denote the set of erased coordinates, where $[n]:=\{1,\ldots,n\}$, and let $\mathcal{E}^{\mathrm{c}}=[n]\setminus\mathcal{E}$. For any $\mathcal{S}\subseteq[n]$, let $H_{\mathcal{S}}$ denote the submatrix of $H$ formed by the columns indexed by $\mathcal{S}$, and let
$\boldsymbol{x}_{\mathcal{S}}$ denote the corresponding subvector of $\boldsymbol{x}$. Since the erasure pattern and the unerased symbols are known at the decoder, every codeword consistent with the channel output must satisfy
\begin{equation}
    H_{\mathcal{E}}\boldsymbol{x}_{\mathcal{E}}
    =
    -H_{\mathcal{E}^{\mathrm{c}}}
    \boldsymbol{x}_{\mathcal{E}^{\mathrm{c}}}.
    \label{eq:erasure-decoding-system}
\end{equation}
For the erasure channel, all codewords consistent with the channel output have the same likelihood. Hence, maximum-likelihood decoding is equivalent to solving \eqref{eq:erasure-decoding-system}. The transmitted
codeword is uniquely recoverable if and only if
\begin{equation}
    \operatorname{rank}\!\left(H_{\mathcal{E}}\right)
    =
    |\mathcal{E}|,
\end{equation}
i.e., if and only if the columns of $H$ indexed by $\mathcal{E}$ are linearly independent.

\section{Chordal Code Scheme}
In this section, we will apply the chordal graph to the erasure channel.
As mentioned in Section \ref{dc}, if a code $\mathcal{C}$ is transmitted through an erasure channel, then its parity-check matrix $H$ will be replaced by a submatrix $H'$ by deleting some columns of $H$. 
\begin{lemma}\cite{diestel2017extremal}\label{l1}
    Let $G=(V,E)$ be a chordal graph and $V'\subset V.$ Then, $G[V']$ is chordal.
\end{lemma}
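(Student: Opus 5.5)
The quickest route uses the second (cycle-based) definition of a chordal graph, and I will follow it first. Take any cycle $C$ of length greater than three in $G[V']$. Every edge of $G[V']$ is an edge of $G$, so $C$ is also a cycle of length greater than three in $G$. Since $G$ is chordal, $C$ has a chord $(u,v)$ in $G$: the vertices $u,v$ lie on $C$ and are not consecutive on it. Both endpoints lie in $V'$ because they are vertices of $C$. By the definition of induced subgraph, $(u,v)$ is therefore an edge of $G[V']$, and it is still a chord of $C$ there. Hence every long cycle of $G[V']$ has a chord, and $G[V']$ is chordal.

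For consistency with Definition~\ref{chor}, which is the form used later for chordal elimination, I would also give the perfect-elimination-ordering argument. Let $x_0>x_1>\cdots>x_{n-1}$ be a perfect elimination ordering of $G$, and restrict it to the vertices of $V'$. For $x_l\in V'$, the set $X'_l$ computed in $G[V']$ equals $X_l\cap V'$. This holds because adjacency between vertices of $V'$ is the same in $G$ and $G[V']$, and the order relation is inherited. Any two vertices of $X_l\cap V'$ are adjacent in $G$, since $X_l$ is a clique. Both lie in $V'$, so they remain adjacent in $G[V']$. Thus each $G[V'][X'_l]$ is a clique, and the restricted ordering is a perfect elimination ordering of $G[V']$.

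There is no real obstacle here. The one point to state carefully is that an induced subgraph retains \emph{every} edge of $G$ between retained vertices; this is what makes chords and clique edges survive. An arbitrary subgraph need not have this property. I would present the ordering argument as the main proof, since it directly yields the PEO used in the decoding application.
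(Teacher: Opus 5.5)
Your proof is correct. The paper does not prove this lemma; it only cites Diestel. There chordality is defined by the cycle condition, and the statement follows immediately, exactly as in your first paragraph.

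Your second argument, restricting the perfect elimination ordering, adds something for how the paper actually uses the lemma. Lemma \ref{4.1} and Theorem \ref{4.2} need a perfect elimination ordering of $P(H_{\mathcal{E}})$ in the sense of Definition \ref{chor}. The cycle-based route delivers that only through the equivalence of the two definitions, which the paper asserts without proof. Restricting the ordering avoids that detour.

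It also gives more than chordality. One fixed ordering of $P(H)$ (for instance the ordering (\ref{order}) from Theorem \ref{chordalcode}), restricted to the erased coordinates, is a perfect elimination ordering for \emph{every} erasure pattern. The decoder therefore never has to recompute an ordering, which the bare existence statement of the lemma does not tell you.

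Your caution that only an \emph{induced} subgraph keeps all edges of $G$ between retained vertices is the right one. It is exactly what the proof of Theorem \ref{subgraph} establishes for $P(H')$ inside $P(H)$.
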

The above lemma shows that the induced subgraph of a chordal graph is also a chordal graph.
\begin{theorem}\label{subgraph}
    Let $H'$ be a column submatrix of $H\in\mathbb{F}_q^{m\times n}$, i.e., $H'$ is obtained by deleting some of the columns of $H$.If $P(H)$ is chordal, then $P(H')$ is chordal.
\end{theorem}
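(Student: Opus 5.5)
The plan is to reduce the statement to Lemma~\ref{l1} by showing that $P(H')$ is, up to relabelling of vertices, exactly the induced subgraph $P(H)[S]$. Here $S\subseteq V=\{1,\dots,n\}$ is the set of column indices retained in $H'$. Let $H'=H_S$, and identify column $j\in S$ of $H$ with the corresponding column of $H'$. The vertex set of $P(H')$ is then $S$. Both graphs have no loops and no multiple edges, so it suffices to compare adjacency between distinct vertices $i,j\in S$.

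The key step is the edge comparison, using the characterization stated after Definition~\ref{primaldef}. Vertices $i,j$ are adjacent in $P(H)$ if and only if some row $k$ has $H_{k,i}\neq0$ and $H_{k,j}\neq0$. Deleting columns does not change any of the remaining entries, and every row is kept. So for $i,j\in S$ and every row $k$ we have $H'_{k,i}=H_{k,i}$ and $H'_{k,j}=H_{k,j}$. Hence $(i,j)$ is an edge of $P(H')$ if and only if it is an edge of $P(H)$. Equivalently, the row supports satisfy $V_k'=V_k\cap S$, so the clique that row $k$ contributes to $P(H')$ is exactly the restriction to $S$ of the clique it contributes to $P(H)$. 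This gives $P(H')=P(H)[S]$.

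Finally, I would apply Lemma~\ref{l1} with $G=P(H)$ and $V'=S$ to conclude that $P(H')$ is chordal. Two degenerate cases need a brief remark. If $S=V$, nothing was deleted and there is nothing to prove. If the lemma is read with strict inclusion, the case $S=\emptyset$, or any $S$ with at most three vertices, is trivially chordal because such a graph has no cycle of length greater than three. If one prefers not to rely on Lemma~\ref{l1}, the cycle definition gives a direct argument. A cycle of length at least four in $P(H)[S]$ is also a cycle in $P(H)$, so it has a chord in $P(H)$. Both endpoints of that chord lie in $S$, so the chord is also an edge of the induced subgraph.

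I do not expect a real obstacle. The only point needing care is the identification $P(H')=P(H)[S]$: the indexing of vertices in $P(H')$ must be matched to the original column labels, and one must check that no edge of $P(H)$ between two retained columns is lost. No such edge is lost, because the witnessing row entries are preserved by deleting columns.
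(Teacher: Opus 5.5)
Your proposal is correct and matches the paper's proof: you identify $P(H')$ with the induced subgraph $P(H)[S]$ by checking, in both directions, that adjacency between retained columns is witnessed by the same unchanged row entries, and then you invoke Lemma~\ref{l1}. Your extra remarks on degenerate cases and the direct chord-based argument are valid but not needed.
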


\begin{IEEEproof}
    Consider an $m\times n$ matrix $H$ with $c_1,c_2,\cdots,c_n$ being the columns of $H$. Let $H'$ be an $m\times l$ submatrix of $H$ with $c_{i_1},c_{i_2},\cdots,c_{i_l}$ being the columns of $H'$. Note that $l\leq n$ and $\{c_{i_1},c_{i_2},\cdots,c_{i_l}\}\subset\{c_1,c_2,\cdots , c_n\}$. So from Definition \ref{primaldef}, we obtain that $V_{P(H)}=\{1,2,\cdots,n\}$ and $V_{P(H')}=\{i_1,i_2,\cdots,i_l\}$. Obviously, $V_{P(H')}\subset V_{P(H)}$. So, to prove that $P(H')$ is an induced subgraph of $P(H)$, we only need to prove that for any two vertices in $V_{P(H')}$, they are adjacent in $P(H)$ if and only if they are adjacent in $P(H')$. 
    
    First, we show that for any two vertices in $V_{P(H')}$, if they are adjacent in $P(H')$, then they are adjacent in $P(H)$. Assume that $i_a,i_b\in V_{P(H')}$ are two arbitrary adjacent vertices in $P(H')$. According to Definition \ref{primaldef}, we obtain that column $c_{i_a}$ and column $c_{i_b}$ have at least one non-zero entry in the same row in $H'$. Since $H'$ is a column submatrix of $H$, it follows that column $c_{i_a}$ and column $c_{i_b}$ have at least one non-zero entry at the same row in $H$ too. So $i_a, i_b$ are adjacent in $P(H)$.
    
    Next, we show that for any two vertices in $V_{P(H')}$, if they are adjacent in $P(H)$, then they are adjacent in $P(H')$. Assume that $i_a,i_b\in V_{P(H')}$ are two arbitrary adjacent vertices in $P(H)$. By the same argument, we obtain that column $c_{i_a}$ and column $c_{i_b}$ have at least one non-zero entry at the same row in $H$.  Since $H'$ is a column submatrix of $H$ and $i_a,i_b\in V_{P(H')}$, it follows that column $c_{i_a}$ and column $c_{i_b}$ remain unchanged in $H'$. So $i_a, i_b$ are adjacent in $P(H')$. Therefore, $P(H')$ is an induced subgraph of $P(H)$. Together with Lemma \ref{l1}, $P(H')$ is chordal. Hence, the theorem is proved.
\end{IEEEproof}

\vspace{2mm}
\addtolength{\textheight}{-0.05in}
 From Subsection \ref{dc}, decoding a code $\mathcal{C}$ with parity-check matrix $H$ over an erasure channel reduces to solving the linear system (\ref{eq:erasure-decoding-system}) where $H_E$ is the column submatrix of $H$ indexed by the erased coordinates. By Theorem \ref{subgraph}, $P(H_E)$ is an induced subgraph of $P(H)$. Therefore, if $P(H)$ is chordal, then $P(H_E)$ is also chordal and admits a perfect elimination ordering (PEO). Consequently, Gaussian elimination can be performed according to this PEO, with lower complexity.

% \subsection{Fill-in}
% In the Gaussian elimination, a fill-in happens if an original zero entry in the matrix turns into a non-zero entry during the elimination. The complexity of the Gaussian elimination will increase if too many fill-ins happen. In this section,
\subsection{Code Construction}

\begin{definition}[Chordal code]
    Let $\mathcal{C}\subset\mathbb{F}_q^n$ be a code and $H$ be a parity-check matrix of $\mathcal{C}$. $\mathcal{C}$ is called \emph{chordal} if $\mathcal{C}$ is equipped with a parity-check matrix $H$ such that $P(H)$ is a chordal graph.
\end{definition}

The following theorem shows a construction of a chordal code $\mathcal{C}$ with $H$ being its parity-check matrix.

\begin{theorem}\label{chordalcode}
    Let $H_1$ be an $m_1\times n$ parity-check matrix of a random code in the systematic form, i.e., \[H_1:= \left[ P \mid I\ \right],\]
    where $P$ is an $m_1\times (n-m_1)$ sparse matrix.  Let $H_2$ be an $m_2\times n$ matrix with $m_2\leq n-m_1$ as follows:
    \[H_2:=\left[ L \mid \boldsymbol{0}\ \right],\]
    with $L$ being an $m_2\times (n-m_1)$ full rank matrix such that the primal graph of $L$ is a complete graph (clique). A simple way to ensure that $P(L)$ is complete is to include a row whose entries are all nonzero. $\boldsymbol{0}$ refers to an all-zero matrix of dimension $m_2\times m_1$. If $m_1+m_2<n$, then
   \[H:= \begin{bmatrix} H_2\\ H_1 \end{bmatrix}\]
   is a parity-check matrix of a chordal code $\mathcal{C}$ with length $n$ and dimension $n-m_1-m_2.$
\end{theorem}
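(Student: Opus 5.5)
The plan is to verify two separate claims: that $H$ has full row rank $m_1+m_2$, which gives the stated dimension, and that $P(H)$ is chordal. For the rank, write
\[
H=\begin{bmatrix} L & \boldsymbol{0}\\ P & I\end{bmatrix},
\]
which is block lower triangular. $L$ is an $m_2\times(n-m_1)$ full-rank matrix with $m_2\le n-m_1$, so its rows are linearly independent. Suppose a linear combination of the rows of $H$ vanishes. Looking at the last $m_1$ columns, where only the identity block is nonzero, the coefficients of the rows of $H_1$ must all be zero. What remains is a vanishing combination of the rows of $L$, so those coefficients are zero as well. Hence $\operatorname{rank}(H)=m_1+m_2$. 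Since $m_1+m_2<n$, the code $\mathcal{C}=\ker H$ has length $n$ and dimension $n-m_1-m_2\ge 1$.

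For chordality, split the vertex set of $P(H)$ into $A=\{1,\dots,n-m_1\}$ (the columns of $L$ and $P$) and $B=\{n-m_1+1,\dots,n\}$ (the identity columns). Let $b_i$ denote the column $n-m_1+i$. I will record three facts, each read off directly from Definition~\ref{primaldef}:
\begin{enumerate}[(i)]
\item $A$ induces a clique. Since $P(L)$ is complete, any two columns in $A$ share a nonzero row of $H_2$.
\item No two vertices of $B$ are adjacent. Each $b_i$ has exactly one nonzero entry, in row $i$ of $H_1$, and distinct identity columns use distinct rows.
\item The neighborhood of $b_i$ is exactly the support of row $i$ of $P$. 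This support is a subset of $A$, and therefore a clique by (i).
\end{enumerate}

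Next, build a perfect elimination ordering in the sense of Definition~\ref{chor}. Order the vertices so that every vertex of $B$ is greater than every vertex of $A$, with arbitrary orderings inside $A$ and inside $B$. Now check the sets $X_l$.
\begin{itemize}
\item For $x_l=b_i\in B$, the smaller neighbors of $b_i$ lie in $A$ by (ii) and (iii). So $X_l$ is $b_i$ together with a subset of its neighborhood. Its neighbors are pairwise adjacent by (iii), and each is adjacent to $b_i$, so $X_l$ is a clique.
\item For $x_l\in A$, every smaller vertex lies in $A$, so $X_l\subseteq A$, which is a clique by (i).
\end{itemize}
Hence the ordering is a PEO, and $P(H)$ is chordal, so $\mathcal{C}$ is a chordal code.

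As a cross-check, one can also use the cycle definition. Take a cycle of length at least $4$. If it lies inside $A$, any non-consecutive pair gives a chord. If it passes through some $b_i$, its two cycle-neighbors lie in $A$ and are adjacent, and that edge is a chord because the cycle has length greater than $3$.

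The only delicate point is fact (iii), namely that an identity column is adjacent precisely to the support of one row of $P$. This must be stated carefully, including the degenerate case of an empty row of $P$, where $b_i$ is isolated and trivially satisfies the PEO condition. Beyond that, the argument is routine bookkeeping. Sparsity of $P$ plays no role in the proof.
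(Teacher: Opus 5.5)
Your proof is correct and follows essentially the same route as the paper: the first $n-m_1$ columns form a clique because $P(L)$ is complete, the identity columns are pairwise non-adjacent with neighborhoods inside that clique, and placing all identity columns before the others gives a perfect elimination ordering. Your block-triangular rank argument is a useful addition, since the paper only asserts the dimension $n-m_1-m_2$ from the size of $H$ without checking that $H$ has full row rank.
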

\begin{IEEEproof}
    According to the dimensions of $H_1$ and $H_2$, it is obvious that $H$ is an $(m_1+m_2)\times n$ matrix. So it is a parity-check matrix of a code $\mathcal{C}$ with length $n$ and dimension $n-m_1-m_2$. So we only need to prove that $P(H)$ is a chordal graph.

    Let $x_1,x_2,\cdots,x_n$ be the $n$ columns in $H$ from left to right, and they correspond to $n$ vertices in $P(H)$. Recall 
    \[ H= \begin{bmatrix} L & \boldsymbol{0}\ \ \\ \underset{\smash{\raisebox{-1.8ex}{$ \underbrace{\hspace{1.5cm}}_{n-m_1\text{ columns}} $}}}{P} & I\ \ \end{bmatrix}. \]
    \vspace{3.6mm}
    
    First, we consider the first $n-m_1$ columns. Since $P(L)$ is a complete graph, vertices $x_1,x_2,\cdots,x_{n-m_1}$ form a clique according to definition \ref{primaldef}, i.e., any two vertices among $x_1,x_2,\cdots,x_{n-m_1}$ are adjacent. Then for the remaining $m_1$ columns, since $I$ is an $m_1\times m_1$ identity matrix and $\boldsymbol{0}$ is the all zero matrix, there is only one $1$ symbol in each column and all these $1$ symbols are in different rows. So any two vertices among $x_{n-m_1+1},x_{n-m_1+2},\cdots, x_n$ are not adjacent in $P(H)$. and each of them forms a clique with some of the vertices among $x_1,x_2,\cdots,x_{n-m_1}$. 
    
    Now we order the vertices as follows:
    \begin{equation}\label{order}
        x_{n-m_1+1}>x_{n-m_1+2}>\cdots x_n>x_1>x_2>\cdots>x_{n-m_1}.
    \end{equation}
    For any vertex among $x_{n-m_1+1},x_{n-m_1+2},\cdots, x_n$ in this ordering, it is only adjacent to some of the vertices among $x_1,x_2,\cdots,x_{n-m_1}$, which appear later in this ordering. These neighbors form a clique. On the other hand, since $x_1,x_2,\cdots,x_{n-m_1}$ form a clique, every vertex among $x_1,x_2,\cdots,x_{n-m_1}$ induces a clique together with all its adjacent vertices appearing later in the ordering. So, every vertex induces a clique with all its adjacent vertices appearing later in the ordering. So, (\ref{order}) is a perfect elimination ordering according to Definition \ref{chor}, and $P(H)$ is a chordal graph. Hence $\mathcal{C}$ is a chordal code and we prove the theorem.
\end{IEEEproof}

Note that it is not necessary for $L$ to have a row whose entries are all non-zero elements. In fact, if for all $x_i,x_j\in\{x_1,x_2,\cdots,x_{n-m_1}\}$, there exists at least one row $r$ in $H$ such that $H_{r,x_i}$ and $H_{r,x_j}$ are non-zero elements, then any two vertices among $x_1,x_2,\cdots,x_{n-m_1}$ are connected, implying that the first $n-m_{1}$ vertices form a clique. By the proof of Theorem \ref{chordalcode}, we obtain that $P(H)$ is a chordal graph. So, as long as $L$ ensures that for all $x_i,x_j\in\{x_1,x_2,\cdots,x_{n-m_1}\}$, there exists at least one row $r$ in $H$ such that $H_{r,x_i}$ and $H_{r,x_j}$ are non-zero elements, $P(H)$ is a chordal graph and $\mathcal{C}$ is a chordal code.

\subsection{Toy example}
Here we give a toy example of a chordal code $\mathcal{C}$ with length $n=7$ and $k=4$. The parity-check matrix $H$ is as follows:
\begin{equation}
    H=\begin{bmatrix} 1,1,1,1,1,0,0\\0,1,0,0,1,1,0 \\ 
    1,0,1,0,0,0,1

    \end{bmatrix}
\end{equation}
with $L=[1,1,1,1,1]$ and \[P=\begin{bmatrix}
    0,1,0,0,1\\1,0,1,0,0
\end{bmatrix}.\]

\begin{figure}
    \centering
    \includegraphics[scale=0.45]{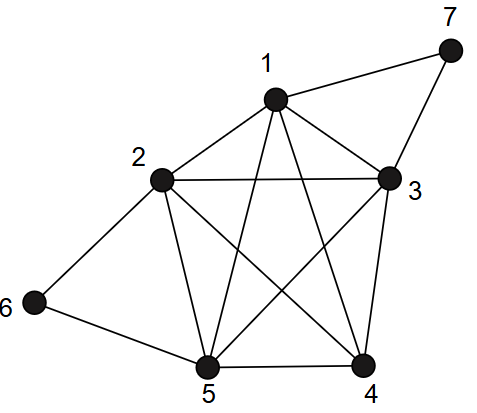}
    \caption{Primal graph of the parity-check matrix $H$ of $\mathcal{C}$}\label{fig1}
    
\end{figure}
\addtolength{\textheight}{-0.05in}

Figure \ref{fig1} shows the primal graph of $H$. Note that vertex $i$ corresponds to the $i^{th}$ column of $H$. Obviously, $P(H)$ is a chordal graph because for every cycle of length greater than 3, there always exists at least one chord. For example, cycle $2-6-5-4-2$ is a length-$4$ cycle and edge $(2,5)$ is the chord. The cycle $1-2-5-3-7-1$ is a length-$5$ cycle and it has three chords, which are $(1,5), (2,3)$ and $ (1,3)$.
\section{Chordal Elimination Complexity }
  In this section, the complexity of chordal elimination of a chordal code in $\mathbb{F}_2^n$ constructed by Theorem \ref{chordalcode} will be analyzed and an upper bound will be given. Moreover, the complexities of chordal elimination and Gaussian elimination on a chordal code in $\mathbb{F}_2^n$ will be compared through simulations.

\subsection{Complexity of the Chordal Elimination}
Let $H\in\mathbb{F}_2^{m\times n}$ be an $m\times n$ matrix of rank $r$ with the largest clique having $\kappa$ vertices. In this subsection, we use the number of field additions taken by the Gaussian elimination as the complexity. This is because in binary fields, it is not necessary to calculate the inverse of $1$. Using Lemma \ref{4.1} and Theorem \ref{4.2}, we will show that the complexity of the chordal elimination is upper bounded by $mr\kappa$.
\vspace{2mm}

\begin{lemma}\label{4.1}
    Let $H\in\mathbb{F}_2^{m\times n}$ be a matrix with $G=P(H)$ being chordal. $x_1>x_2>\cdots> x_n$ is the PEO. For each $t\in\{1,\ldots,n\}$, let $H^{(t)}$ be the matrix formed by the active rows and the unprocessed columns immediately before column $x_t$ is processed. Let \[ V_t:=\{x_t,x_{t+1},\ldots,x_n\}. \] Then \[ P(H^{(t)})\subseteq G[V_t]. \] In other words, Gaussian elimination according to the PEO creates no new adjacency outside the corresponding induced subgraph of the original primal graph.
\end{lemma}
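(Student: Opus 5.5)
We argue by induction on $t$. The invariant is slightly stronger than the statement: every active row of $H^{(t)}$ has support contained in a set $S$ of the form
\[
S = V_t\cap V_i \quad\text{or}\quad S\subseteq V_t \text{ a clique of } G,
\]
where $V_i$ is the support of an original row. It suffices to carry the weaker form "the support of every active row of $H^{(t)}$ is a clique of $G[V_t]$". This implies the lemma: by Definition~\ref{primaldef}, an edge of $P(H^{(t)})$ joins two columns that are both nonzero in some active row. Those columns lie in $V_t$ and are adjacent in $G$, so the edge is an edge of $G[V_t]$.

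\emph{Base case} $t=1$. Here $H^{(1)}=H$ and $V_1=V$. Each row support $V_i$ is a clique of $G$ by Definition~\ref{primaldef}.

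\emph{Inductive step.} Assume the invariant before processing $x_t$. There are two cases.

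If no active row has a nonzero entry in column $x_t$, the column is skipped. Deleting column $x_t$ intersects each support with $V_{t+1}$, and a subset of a clique is still a clique in $G[V_{t+1}]$.

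Otherwise a pivot row $p$ is chosen with $H_{p,x_t}=1$. It leaves the active set, so its later support does not matter. Consider any other active row $r$ with $H_{r,x_t}=1$. Over $\mathbb{F}_2$ it becomes $r+p$, whose support is contained in $\mathrm{supp}(r)\cup\mathrm{supp}(p)$. After removing $x_t$, this support lies in $N^-(x_t)$, the set of neighbours of $x_t$ later in the ordering. Indeed, by the induction hypothesis $\mathrm{supp}(r)$ and $\mathrm{supp}(p)$ are cliques in $G[V_t]$ containing $x_t$. Every other element of them is therefore adjacent to $x_t$ in $G$ and lies in $V_{t+1}$. By Definition~\ref{chor} applied to the PEO, the set $X_t=\{x_t\}\cup N^-(x_t)$ is a clique. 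Hence the new support is contained in the clique $N^-(x_t)\subseteq V_{t+1}$. Rows with $H_{r,x_t}=0$ are unchanged apart from losing column $x_t$, which they did not contain. This re-establishes the invariant for $t+1$.

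The key and only delicate step is that last one: showing that the union of two supports through $x_t$ stays a clique. This is exactly where the perfect elimination property is used. One must check that the invariant is phrased in terms of cliques of the original $G$, not of the current $P(H^{(t)})$, so that Definition~\ref{chor} applies directly to $x_t$ and its later neighbours in the original graph. Handling of skipped columns and of the removal of processed columns is routine.
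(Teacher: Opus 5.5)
Your proof is correct and follows essentially the same route as the paper: induction on $t$ with the skip/pivot case split, and the induction hypothesis places the supports of the pivot row and of every active row meeting $x_t$ inside the PEO clique $\{x_t\}\cup N_G(x_t)$, so each sum has support in a clique of $G[V_{t+1}]$. Your row-support invariant is equivalent to $P(H^{(t)})\subseteq G[V_t]$, so the opening remark about a ``stronger'' invariant can be dropped; stating it relative to the original $G$ is also a bit more precise than the paper's remark that no new edges of $P(H^{(t)})$ are created, since elimination can create edges that are absent from $P(H^{(t)})$ but are still edges of $G$.
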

\begin{IEEEproof} We prove the statement by induction on $t$. For $t=1$, no row operation has been performed, and all columns and rows are active. So, \[ H^{(1)}=H \] and \[V_1=\{x_1,x_2,\cdots,x_n\}.\] Therefore \[ P(H^{(1)})=P(H)=G=G[V_1]. \] Hence, the claim holds for $t=1$. 

Now assume \[ P(H^{(t)})\subseteq G[V_t] \] holds before column $x_t$ is processed. If column $x_t$ has zero entries in all active rows, no row operation is performed. Therefore, by removing $x_t$ from the set of unprocessed columns, we have\[P(H^{(t+1)})\subseteq P(H^{(t)})[V_{t+1}]\subseteq G[V_{t+1}].\]  On the other hand, if column $x_t$ has a non-zero entry in at least one active row, let $R_t$ be the pivot row having a non-zero entry in column $x_t$.  Define \[ C_t:=\{x_t\}\cup N_G(x_t), \] where \[ N_G(x_t):= \{x_j\in V_{t+1}: x_j \mbox{ is adjacent to } x_t\} \] is the set of unprocessed neighbors of $x_t$ in $G$. So $C_t$ is exactly the set of vertex $x_t$ and all its unprocessed neighbors. 
Since $x_1>x_2>\cdots>x_n$ is a PEO, the set $C_t$ induces a clique in $G$ by Definition \ref{chor}. Now consider every other non-pivot active row $R$ having a non-zero entry in column $x_t$, they will be replaced by $R+R_t$ when processing $x_t$ so that the non-zero entry turns to 0. Note that \[\mbox{supp}(R_t)\subseteq C_t\] by the induction hypothesis. For all $R$ which is still active when processing $x_t$, any non-zero column $x_j$ in $R$ is a neighbor of $x_t$ and $x_j\in V_{t+1}$. So $x_j\in C_t$. So, \[\mbox{supp}(R)\subseteq C_t.\] Therefore, when $R$ is replaced by $R+R_t$, the fill-in will only happen in the columns in $C_t$, i.e., \[\mbox{supp}(R+R_t)\subseteq \mbox{supp}(R)\cup\mbox{supp}(R_t)\subseteq C_t.\] So the row operations when processing $x_t$ will not create new edges in $P(H^{(t)})$. So,  after column $x_t$ has been eliminated, its pivot row is removed from the active rows and $x_t$ is removed from the set of unprocessed columns, implying that $P(H^{(t+1)})$ is a subgraph of $G[V_t]$. So,\[P(H^{(t+1)})\subseteq G[V_{t+1}].\]
Hence, by induction, the lemma is proved.
\end{IEEEproof}

\vspace{2mm}

\begin{theorem}\label{4.2}
    Let $H$ be a matrix in $\mathbb{F}_2^{m\times n}$ whose primal graph $G=P(H)$ is chordal. Let $\kappa$ be the number of vertices of the largest clique in $G$. Then the complexity of the chordal elimination (Gaussian elimination following the PEO) is less than  $m\cdot r\cdot \kappa,$ with $r=\mbox{rank}(H)$. 
\end{theorem}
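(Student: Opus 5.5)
The plan is to count field additions column by column along the PEO $x_1>x_2>\cdots>x_n$, and to use Lemma \ref{4.1} to bound the cost of each pivot step by the clique size $\kappa$. The argument has three parts: bound the number of pivot steps, bound the rows touched per step, and bound the work per row.

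\emph{Number of pivot steps.} When column $x_t$ is processed and has no non-zero entry in an active row, nothing is done and the cost is zero. Otherwise a pivot row $R_t$ is chosen and removed from the active set. Each pivot step removes one row and strictly increases the number of independent pivot rows obtained. Row operations preserve rank, so the number of pivot steps is at most $r=\operatorname{rank}(H)$.

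\emph{Cost of a single pivot step.} At step $t$, every active row $R\neq R_t$ with a non-zero entry in column $x_t$ is replaced by $R+R_t$. There are at most $m-1<m$ such rows. The number of field additions needed for one replacement is at most $|\mathrm{supp}(R_t)|$, since only positions in the support of $R_t$ change. (Alternatively, count the unprocessed columns involved; both are bounded the same way.)

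The key step is to show that $|\mathrm{supp}(R_t)|\le\kappa$. By Lemma \ref{4.1}, $P(H^{(t)})\subseteq G[V_t]$, so $\mathrm{supp}(R_t)$, restricted to unprocessed columns, is contained in $\{x_t\}\cup N_G(x_t)$, with $N_G(x_t)\subseteq V_{t+1}$. This is exactly the set $C_t$ from the proof of Lemma \ref{4.1}. Because the ordering is a PEO, $C_t$ is a clique in $G$, so $|C_t|\le\kappa$. Columns already processed are zero in every active row (they were eliminated, or were already zero there), so they contribute no additions. Hence each row update costs at most $\kappa$ additions, and in fact at most $\kappa-1$ if the pivot position itself is not counted as an addition.

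\emph{Combining.} Summing over at most $r$ pivot steps, each touching at most $m-1$ rows at a cost of at most $\kappa$ additions per row, gives a total of at most $(m-1)r\kappa<mr\kappa$.

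The main obstacle is the precise bookkeeping. First, the support of the pivot row must be confined to $C_t$ even after earlier row operations; Lemma \ref{4.1} supplies this, but one must check that it controls the \emph{row support} and not only adjacencies. That follows because every non-zero entry of a row forms a clique in $P(H^{(t)})$ containing $x_t$, and hence lies in $N_G(x_t)\cup\{x_t\}$. Second, the word ``less than'' must be justified; I would get the strict inequality from the $m-1$ bound on eliminated rows, or from the fact that the pivot entry itself requires no addition.
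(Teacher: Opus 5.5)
Your proposal is correct and follows essentially the same argument as the paper: there are at most $r$ successful pivot steps, and in each one the pivot row is added to at most $m-1$ other active rows; its support lies in the clique $C_t$ of size at most $\kappa$ (by Lemma~\ref{4.1}, and because processed columns vanish on active rows), giving at most $(m-1)r\kappa<mr\kappa$ additions in total. The paper writes this sum as an induction over the successful steps, and both arguments implicitly need $H\neq 0$ (so that $r\kappa\geq 1$) for the final inequality to be strict.
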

\begin{IEEEproof}
Let $H$ be an $m\times n $ matrix whose primal graph $G$ is chordal. Let $x_1,x_2,\cdots,x_n$ be the vertices of $G$. Without loss of generality, we assume that $x_1>x_2>\cdots>x_n$ is the perfect elimination ordering. The chordal elimination, processed by the decoder, consists of $n$ elimination steps, with $r=\mbox{rank}(H)$ of them being \emph{successful elimination steps}. At each elimination step $t$, column $x_{t}$ is called the pivot column. If the pivot column $x_{t}$ has no non-zero entry, then no pivot row is selected and the column is skipped. Otherwise, the decoder chooses a unique pivot row and adds the pivot row to every other row having a non-zero entry in $x_{t}$. Therefore, the pivot column $x_{t}$ has exactly one non-zero entry in the pivot row and has zero entries in other rows. Such an elimination step is then called \emph{successful}. Let $x_{i_1},x_{i_2},\cdots, x_{i_r}$ be the pivot columns in successful elimination steps with $i_1<i_2<\cdots<i_r$. Note that field additions are performed in successful elimination steps.

 Now we prove  the theorem by induction. First, consider the pivot column $x_{i_1}$ in the first successful elimination step. Let $R_1$ be the selected pivot row. Since $R_1$ has a non-zero entry in column $x_{i_1}$, every other unprocessed column whose entry in $R_1$ is non-zero is adjacent to $x_{i_1}$ in the primal graph of the current active matrix $H^{(i_1)}=H$. By Lemma \ref{4.1}, $\operatorname{supp}(R_1)\subseteq C_{i_1}. $  So, adding $R_1$ to another active row requires at most \[ |\operatorname{supp}(R_1)|\leq |C_{i_1}|\leq\kappa \] field additions. Since there are at most $m-1$ non-pivot active rows, the first successful elimination step requires at most \[ (m-1)|C_{i_1}|<m\kappa \] field additions.
 
 Next, assume that the first $l-1$ successful elimination steps require fewer than $ m(l-1)\kappa $ field additions for $l\geq 2$. Consider the $l$-th pivot column $x_{i_l}$, and let $R_l$ be its pivot row. Since all previously processed columns have zero entries in all the remaining active rows, $R_l$ has non-zero entries only in the set of currently unprocessed columns. By Lemma \ref{4.1}, $\operatorname{supp}(R_l)\subseteq C_{i_l}.$ Note that $C_{i_l}$ forms a clique in $G$. Then it follows that \[ |\operatorname{supp}(R_l)| \leq |C_{i_l}| \leq \kappa. \] Since the pivot row $R_l$ is added only to the remaining active non-pivot rows having a non-zero entry in column $x_{i_l}$, there are at most $m-1$ such rows, requiring at most $ |\operatorname{supp}(R_l)|\leq \kappa $ field additions for each row. Thus, the $l$-th successful elimination step requires at most \[ (m-1)|C_{i_l}|<m\kappa \] field additions. Combining with the induction hypothesis, the first $l$ successful elimination steps require fewer than $ ml\kappa $ field additions. Therefore, by induction, all $r$ successful elimination steps require fewer than $mr\kappa$ field additions. Hence, the theorem is proved.
\end{IEEEproof}

Note that under the design of Theorem \ref{chordalcode}, the largest clique is usually the clique created by the rows in $L$, i.e., $\kappa=n-m_1$. On the other hand, the complexity of the Gaussian elimination is roughly upper bounded by $mrn$. Hence, reduction of the complexity comes from replacing code length $n$ by the maximum clique size $\kappa$.

\subsection{Chordal Elimination Performance}
In this subsection, we compare the complexity of chordal elimination with the Gaussian elimination through $L=100$ simulations.
\begin{table}[t] \vspace*{0.03in}\caption{Simulations for the Chordal Code With $k=30$} \label{tab:simulation-k20} \centering \footnotesize \setlength{\tabcolsep}{4.5pt} \renewcommand{\arraystretch}{1.15} \begin{tabular}{lcc}  \hline Operation & Gaussian & Chordal\\ \hline Field additions (total complexity) & 36596 & 29462\\ Average complexity per trial & 365.96 & 294.62\\ \hline Complexity reduction & \multicolumn{2}{c}{19.494\%}\\ \hline \hline \end{tabular} \end{table}
 % \begin{table}[t] \caption{Simulation Parameters for the Chordal Codes With $k=20$} \label{para2} \centering \footnotesize \setlength{\tabcolsep}{4.5pt} \renewcommand{\arraystretch}{1.15} \begin{tabular}{lcc} \hline \hline \multicolumn{3}{c}{\textbf{Simulation Parameters}}\\ \hline Number of simulations, $L$ & \multicolumn{2}{c}{100}\\ Code length, $n$ & \multicolumn{2}{c}{50}\\ Code dimension parameter, $k$ & \multicolumn{2}{c}{20}\\ Field size, $q$ & \multicolumn{2}{c}{2}\\ Column erasure probability & \multicolumn{2}{c}{0.3000}\\ Average number of added rows & \multicolumn{2}{c}{5.000}\\ Average number of deleted columns & \multicolumn{2}{c}{15.240}\\ Unique decoding instances & \multicolumn{2}{c}{100}\\ Non-unique consistent instances & \multicolumn{2}{c}{0}\\ \hline \hline \end{tabular} \end{table}
\begin{table}[t] \caption{Simulations for the Chordal Codes With $k=20$} \label{tab:simulation-k30} \centering \footnotesize \setlength{\tabcolsep}{4.5pt} \renewcommand{\arraystretch}{1.15} \begin{tabular}{lcc}   \hline Operation & Gaussian & Chordal\\ \hline Field additions (total complexity) & 51,692 & 40,003 \\ Average complexity per trial & 516.92 & 400.03\\ \hline Complexity reduction & \multicolumn{2}{c}{22.613\%}\\ \hline \hline \end{tabular} \end{table} In each simulation, we generate a random parity-check matrix $H$ in $\mathbb{F}^{(n-k+m)\times n}_q$, with $n=50,k=30,m=5$ and $q=2.$ $n$ is the code length, $k-m$ is the dimension of the code, i.e., the code has $q^{k-m}$ codewords. The code rate of the chordal code is $\frac{(30-5)}{50}=0.5$ When passing an erasure channel with erasure probability $0.3$, i.e., every symbol of the transmitted codeword $x$ has a $30\%$ probability to be erased. Among all $100$ simulations, $96$ of them decode successfully and $4$ of them have a decoding failure. 

% \begin{figure}
%     \centering
%     \includegraphics[scale=0.7]{simulations.png}
%     \caption{Comparison of chordal elimination and Gaussian elimination}\label{fig2}
 
% \end{figure}
Table \ref{tab:simulation-k20} shows the complexities of Gaussian elimination and the chordal elimination in the case that $k=30$.
Note that in the simulations of the case that $k=30$, the successful rate is $96\%$. If a higher successful rate is required, we could lower the code rate to improve the successful rate. Table \ref{tab:simulation-k30} shows another result about the simulations of a lower code rate. We take 100 simulations again and the only difference is that we lower the code rate from $0.5$ to $\frac{20-5}{50}=0.3$ by changing $k$ from $30 $ to $20$. decoding succeeds in all 100 simulations.

% \begin{figure}
%     \centering
%     \includegraphics[scale=0.72]{simulations1.png}
%     \caption{Comparison of chordal elimination and Gaussian elimination}\label{fig3}
 
% \end{figure}
\section{Conclusion} In this paper, we introduced chordal graphs and defined chordal codes. We showed that the chordality of a parity-check matrix's primal graph is preserved under column erasure, so that the decoding using chordal elimination can be processed adding less fill-in by following the perfect elimination ordering. We also presented an explicit construction of chordal codes and proved that the chordal elimination has complexity at most $mr\kappa$, where $\kappa$ is the maximum clique size of the primal graph and $r$ is the rank of the parity-check matrix. Future work may investigate code constructions that achieve better trade-offs among code rate, erasure-correction capability, and decoding complexity.

\clearpage
\section*{Acknowledgments}
{\textit{AI statement:} The authors used ChatGPT for assistance with checking the correctness of proofs, and language polishing in the main text. The authors are responsible for the mathematical claims and the final manuscript.}

\bibliographystyle{IEEEtran}
\bibliography{ref.bib}

@book{RichardsonUrbanke2008,
  author    = {Thomas Richardson and R{\"u}diger Urbanke},
  title     = {Modern Coding Theory},
  publisher = {Cambridge University Press},
  year      = {2008}
}

@article{cifuentes2016exploiting,
  title={Exploiting chordal structure in polynomial ideals: A grobner bases approach},
  author={Cifuentes, Diego and Parrilo, Pablo A},
  journal={SIAM Journal on Discrete Mathematics},
  volume={30},
  number={3},
  pages={1534--1570},
  year={2016},
  publisher={SIAM}
}

@article{rose1970triangulated,
  title={Triangulated graphs and the elimination process},
  author={Rose, Donald J},
  journal={Journal of Mathematical Analysis and Applications},
  volume={32},
  number={3},
  pages={597--609},
  year={1970},
  publisher={Elsevier}
}

@article{berge1967some,
  title={Some classes of perfect graphs, Graph Theory and Theoretical Physics},
  author={Berge, Claude and Harary, F},
  journal={Academic Press, London},
  volume={155},
  pages={165},
  year={1967}
}

@incollection{diestel2017extremal,
  title={Extremal graph theory},
  author={Diestel, Reinhard},
  booktitle={Graph theory},
  year={2017},
  publisher={Springer}
}

@article{bondy2008graph,
  title={Graph theory, volume 244 of Graduate Texts in Mathematics. ISBN: 978-1849966900 Springer},
  author={BONDY, JA and MURTY, USR},
  journal={New York},
  year={2008}
}

@article{harb2025refill,
  title={Refill: Reinforcement learning for fill-in minimization},
  author={Harb, Elfarouk and Lam, Ho Shan},
  journal={arXiv preprint arXiv:2501.16130},
  year={2025}
}

@inproceedings{soro2009erasure,
  title={Erasure codes with a banded structure for hybrid iterative-ML decoding},
  author={Soro, Alexandre and Cunche, Mathieu and Lacan, J{\'e}r{\^o}me and Roca, Vincent},
  booktitle={GLOBECOM 2009-2009 IEEE Global Telecommunications Conference},
  pages={1--6},
  year={2009},
  organization={IEEE}
}

@inproceedings{urman2021efficient,
  title={Efficient maximum likelihood decoding of polar codes over the binary erasure channel},
  author={Urman, Yonatan and Burshtein, David},
  booktitle={2021 IEEE International Symposium on Information Theory (ISIT)},
  pages={1600--1605},
  year={2021},
  organization={IEEE}
}

@inproceedings{cunche2010analysis,
  title={Analysis of quasi-cyclic LDPC codes under ML decoding over the erasure channel},
  author={Cunche, Mathieu and Savin, Valentin and Roca, Vincent},
  booktitle={2010 International Symposium On Information Theory \& Its Applications},
  pages={861--866},
  year={2010},
  organization={IEEE}
}

% if have a single appendix:
%\appendix[Proof of the Zonklar Equations]
% or
%\appendix  % for no appendix heading
% do not use \section anymore after \appendix, only \section*
% is possibly needed

% use appendices with more than one appendix
% then use \section to start each appendix
% you must declare a \section before using any
% \subsection or using \label (\appendices by itself
% starts a section numbered zero.)
%

\end{document}